\newtheorem{theorem}{Theorem}[section]
\newtheorem{rem}[theorem]{Remark}
\newenvironment{proof}[1][Proof]{\begin{trivlist}
\item[\hskip \labelsep {\bfseries #1}]}{\end{trivlist}}
\begin{document}

\title{A Model of Market Limit Orders By Stochastic PDE's,\\
    Parameter Estimation, and Investment Optimization}
\author{Zhi Zheng, Richard B. Sowers}
\date{July 1, 2012}

\maketitle

\begin{abstract}

In this paper we introduce a completely continuous and time-variate model of the evolution of market limit orders based on the existence, uniqueness, and regularity of the solutions to a type of stochastic partial differential equations obtained in~\cite{zb}. In contrary to several models proposed and researched in literature, this model provides complete continuity in both time and price inherited from the stochastic PDE, and thus is particularly suitable for the cases where transactions happen in an extremely fast pace, such as those delivered by high frequency traders (HFT's). 

We first elaborate the precise definition of the model with its associated parameters, and show its existence and uniqueness from the related mathematical results given a fixed set of parameters. Then we statistically derive parameter estimation schemes of the model using maximum likelihood and least mean-square-errors estimation methods under certain criteria such as AIC to accommodate to variant number of parameters . Finally as a typical economics and finance use case of the model we settle the investment optimization problem in both static and dynamic sense by analysing the stochastic (It\^{o}) evolution of the utility function of an investor or trader who takes the model and its parameters as exogenous. Two theorems are proved which provide criteria for determining the best (limit) price and time point to make the transaction.
\end{abstract}

\section{Introduction}

In literature there have been researches on the modeling of market limit orders and their execution such as~\cite{mloem} and~\cite{mlodm}, most of which are based on discrete settings, for instance, models based on Poisson processes and/or queuing theory. However, because of rapid technological evolution which brings about ultra-fast microprocessors and hardware, trading behaviors and patterns involved with a large amount of limit order creation, transaction, and cancellation within a short period of time, such as high frequency trading (HFT), have become quite popular and tend to have a heavy impact on the mechanisms of price discovery and formulation. Consequently, a continuous and dynamic model of the evolution of limit orders in a particular market and their dynamics is proposed, which is described by a type of stochastic partial differential equations, Stefan equations, with a number of model parameters to be estimated given a real dataset.

\begin{figure}
\centering
\includegraphics[width=100mm]{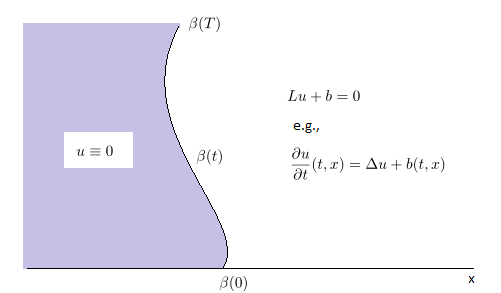}
\caption{An Illustration of Moving Boundary PDE's}
\label{Fig:mbp1}
\end{figure}

Such type of equations of $u(t,x)$ describes the behavior of a system that consists of two phases, as illustrated in Figure~\ref{Fig:mbp1}, where $\beta(t)$ is a moving boundary which is part of the solution and must be solved simultaneously with $u(t,x)$. As can be seen in Figure~\ref{Fig:mbp1}, in the region to the left of the moving boundary (namely, the set $\{(t,x)\in[0,\infty)\times\mathbb R:x\le\beta(t)\}$) $u$ is constantly set to $0$; on the right side of the boundary (the set $\{(t,x)\in[0,\infty)\times\mathbb R:x>\beta(t)\}$) $u$ is described by a PDE of the general form $Lu+b=0$ where $L$ is a predefined second-order differential operator.

For a moving boundary PDE problem, in addition to the regular boundary condition such as the Dirichlet condition  $u(t,\beta(t))=0$, there is always an extra boundary condition that describes the dynamics at the moving boundary, for instance, the \emph{Stefan boundary condition}
 \begin{equation}\label{Eq:SCondition}
  \frac{\partial u}{\partial x}(t,\beta(t)+)=\rho\dot\beta(t).
  \end{equation}

The type of moving boundary PDE's we shall use in modeling of market limit order evolution is the \emph{Stefan problems}, where $L$ is a heat or parabolic operator (for instance, $L:=-\partial/\partial t+\partial^2/\partial x^2$) with the Stefan boundary condition. Such type of problems has a variety of applications. For instance, in physics, they model the phenomena such as ice melting with the Stefan condition describing the heat balance at the interface (the moving boundary, see~\cite{fri64}); in finance they model the valuation of American options with the PDE derived from the Black-Scholes formula and the moving boundary describing the early exercise price boundary (see Lemma 7.8, Chapter 2,~\cite{mmf}). As it turns out in this paper such equations are also perfectly suitable for modeling the dynamics of high frequency limit order transactions and their evolution, providing statistical parameter estimation methods, and formulating and solving the investment optimization problem based on a typical utility function. Note that~\cite{zb} provides the mathematical foundation and develops essential techniques that enable us to justify the existence and uniqueness of such type of equations, while obtaining its essential regularities.

In Section~\ref{S:MLO:Modeling} we model the evolution of limit orders in a particular market by Stefan equations according to the following facts, and show the existence of such model based on the mathematical results obtained in~\cite{zb}.
\begin{enumerate}
\item[(1)] Limit orders are placed, cancelled, and executed in a manner where jitters tend to be rapidly smoothed out, which is why we have a Laplacian;
\item[(2)] The change of the mid-price is driven by the intensity of interaction between ask and bid orders around the mid-price;
\item[(3)] The randomness comes from the constant creation, cancellation, and execution of limit orders; its intensity varies at different (limit) prices, and tends to vanish as the price goes far beyond the mid.
\end{enumerate}

In Section~\ref{S:MLO:ParamEst} we study the methods to estimate the model parameters based on a given limit order dataset and derive the statistics such as a Maximum-Likelihood Estimator (MLE) and an estimator that minimizes the Mean-Square Errors (MSE), both under AIC (see~\cite{aicm}) since the number of parameters (or dimension) is to be estimated itself. Under certain simplified (or degenerated) circumstances an explicit expression of an MLE estimator is also derived.

In Section~\ref{S:MLO:Optim} we study the investment optimization problem derived from the model, formulate and analyze the static and dynamic properties of the utility function of an investor who takes the model and its parameters as exogenous, which finally enable us to find the optimal limit-price-to-buy (or equivalently, amount-to-buy) of the asset and the corresponding amount of consumption, given a fixed amount of wealth and at a given time. Both static and dynamic analysis are studied to obtain the optimality of the investment via the limit order model, and two theorems are given respectively for those two types of analysis as the criteria to test for optimality from the model dynamics using It\^{o}'s Formula, which also have intuitive interpretations.

\section{Modeling}\label{S:MLO:Modeling}

\begin{figure}
\centering
\includegraphics[width=120mm]{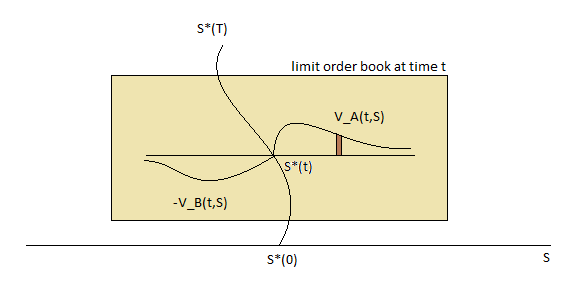}
\caption{The evolution model of the limit order book}
\label{Fig:mlo1}
\end{figure}

We model the evolution of the limit orders of a particular asset in the market. Suppose we work within the time interval $[0,T]$, and $S$ denotes the natural log of the price so $S\in\mathbb R$. At a particular time point $t\in[0,T]$, suppose the volume of the ask (resp. bid) limit orders on the limit order book from $S$ to $S+dS$ is $V_A(t,S)dS$ (resp. $V_B(t,S)dS$), and suppose the natural log of the mid-price is $S^*(t)$. Fix a probability space $(\Omega,\mathcal{F},\mathbb{P})$, We then have the following fairly natural but important model assumptions:
\begin{enumerate}
\item[(1)] all matched orders are executed immediately since all major trading centers today have been computerized, then we have for all $(S,t)$ such that $S>S^*(t)$, $V_B(S,t)=0$; similarly for all $(S,t)$ such that $S<S^*(t)$, $V_A(S,t)=0$;
\item[(2)] since jitters have a tendency to be rapidly smoothed out, $\partial V_A/\partial t$ contains a component $\alpha_A\Delta V_A$ where $\alpha_A$ is a positive constant and $\Delta:=\partial^2/\partial S^2$ is the Laplacian; the same is true for $V_B(S,t)$ with $\alpha_B$;
\item[(3)] the change of mid-price is driven by the ``strength'' of the ask and bid orders placed around the mid-price, which implies we have a Stefan-type condition
    \begin{equation}\label{Eq:mloBC}
     \rho\frac{dS^*}{dt}(t)=\left[\frac{\partial V_A}{\partial S}(t,S^*(t)+)+\frac{\partial V_B}{\partial S}(t,S^*(t)-)\right]
     \end{equation}
    where $\rho$ is a constant;
\item[(4)] for $S\ge S^*(t)$ the ask orders are placed in a stochastic manner, hence another component of $\partial V_A/\partial t$ is $\sigma_A(|S-S^*(t)|)\frac{\partial^2 W}{\partial t \partial S}$ where $\sigma_A$ is a regular scaling function (defined in Chapter~\ref{Ch:sspw}) and $W:\Omega\times[0,T]\times \mathbb R\to\mathbb R$ is a standard 2-dimensional Brownian sheet; the same is true for $V_B$ with $S\le S^*(t)$ and $\sigma_B$.
\end{enumerate}

The model is illustrated in Figure~\ref{Fig:mlo1}.

In sum, for a set of given model parameters $u_{0A},u_{0B},\alpha_A,\alpha_B,\sigma_A,\sigma_B,\rho$, the model has
\begin{equation}\label{Eq:mlo}
\begin{split}
 \frac{\partial V_A}{ \partial t} &= \alpha_A\frac{\partial^2 V_A}{\partial S^2} +\sigma_A(|S-S^*(t)|)\frac{\partial^2 W}{\partial t\partial S},\forall S>S^*(t),\\
 V_A(t,S)&=0,\forall S\le S^*(t),\\
 V_A(0,S)&=u_{0A}(S),\\
 \frac{\partial V_B}{ \partial t} &= \alpha_B\frac{\partial^2 V_B}{\partial S^2} +\sigma_B(|S-S^*(t)|)\frac{\partial^2 W}{\partial t\partial S},\forall S<S^*(t),\\
 V_B(t,S)&=0,\forall S\ge S^*(t),\\
 V_B(0,S)&=u_{0B}(S),\\
 \rho\frac{dS^*}{dt}(t)&=\left[\frac{\partial V_A}{\partial S}(t,S^*(t)+)+\frac{\partial V_B}{\partial S}(t,S^*(t)-)\right].
\end{split}\end{equation}

\begin{theorem}
The solution $V_A,V_B,S^*$ to model~\eqref{Eq:mlo} exists and is unique for $0\le t<\tau$ where $\tau$ is a well-defined stopping time.
\end{theorem}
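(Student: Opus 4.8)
The plan is to reduce the coupled system~\eqref{Eq:mlo}, which consists of two stochastic heat equations sharing a single moving boundary $S^*$, to one free-boundary SPDE of the general Stefan type whose existence, uniqueness, and regularity are established in~\cite{zb}, and then to define $\tau$ as the first exit time from the regularity class in which that theory applies. First I would combine $V_A$ and $V_B$ into a single field. Because $V_A$ is supported in $\{S>S^*(t)\}$ and $V_B$ in $\{S<S^*(t)\}$, the two supports are disjoint and meet only along the interface $S=S^*(t)$. Setting $U:=V_A-V_B$, we have $U=V_A>0$ to the right of the boundary and $U=-V_B<0$ to the left, so $S^*(t)$ is exactly the sign-change (zero level) set of $U$. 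Substituting into~\eqref{Eq:mlo} shows that $U$ solves a single parabolic SPDE $\partial_t U=\alpha\,\partial^2 U/\partial S^2+\Sigma\,\partial^2 W/\partial t\partial S$ with piecewise-constant diffusivity $\alpha=\alpha_A\mathbf{1}_{\{S>S^*\}}+\alpha_B\mathbf{1}_{\{S<S^*\}}$ and noise coefficient $\Sigma=\sigma_A(|S-S^*|)\mathbf{1}_{\{S>S^*\}}-\sigma_B(|S-S^*|)\mathbf{1}_{\{S<S^*\}}$.

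Crucially, the two flux terms in~\eqref{Eq:mloBC} collapse into the classical Stefan flux-jump condition: since $\partial_S U(t,S^*(t)+)=\partial_S V_A(t,S^*(t)+)$ and $\partial_S U(t,S^*(t)-)=-\partial_S V_B(t,S^*(t)-)$, the moving-boundary law becomes $\rho\dot S^*(t)=\partial_S U(t,S^*(t)+)-\partial_S U(t,S^*(t)-)$, which is precisely of the form~\eqref{Eq:SCondition} treated in~\cite{zb}. A convenient device for verifying the structural hypotheses is the moving-frame change of variables $y=S-S^*(t)$, which fixes the interface at $y=0$ at the cost of an added transport term $\dot S^*(t)\,\partial_y U$; this makes transparent that the problem is a fixed-boundary parabolic SPDE whose sole nonlinearity enters through the $\dot S^*$ coupling, exactly the structure handled in~\cite{zb}. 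I would then check the remaining admissibility conditions of that theorem, namely that $\sigma_A,\sigma_B$ are regular scaling functions in the required sense, and that the initial data $u_{0A},u_{0B}$ lie in the admissible function space and satisfy the compatibility and sign conditions at $S^*(0)$, so that $U(0,\cdot)$ is an admissible datum with the correct two-sided sign structure.

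Applying the existence/uniqueness/regularity result of~\cite{zb} then yields a unique regular $U$, hence a unique triple $(V_A,V_B,S^*)$ upon restoring $V_A=U^+$ and $V_B=U^-$ (positive and negative parts) and reading off $S^*(t)$ as the interface; uniqueness of the triple follows from uniqueness of $U$ together with the faithfulness of this reformulation. The stopping time $\tau$ is then defined as the first time the solution leaves the regularity class guaranteed by~\cite{zb} --- for instance the first time the interface flux $\partial_S U(t,S^*(t)\pm)$ fails to remain finite and H\"older, or the solution norm exits a prescribed ball, or the sign separation between $U^+$ and $U^-$ degenerates. Because each of these conditions is defined through an adapted, path-continuous functional of the solution, $\tau$ is a genuine stopping time with respect to the filtration generated by $W$.

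I expect the main obstacle to be the coupling through the shared boundary combined with the two distinct diffusivities $\alpha_A\neq\alpha_B$. Unlike a one-phase Stefan problem, the velocity $\dot S^*$ is driven simultaneously by the fluxes from both phases, so the a priori control of $\partial_S U$ at the interface --- which both powers the nonlinearity and sets the existence horizon --- must be obtained uniformly from two-sided parabolic estimates. Verifying that the estimates of~\cite{zb} survive across the discontinuity in $\alpha$ (possibly after a side-dependent rescaling $S\mapsto S/\sqrt{\alpha}$ that then has to be reconciled at the interface) is the delicate point, and is exactly where the regularity results of~\cite{zb} are indispensable.
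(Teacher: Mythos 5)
There is a genuine gap, and it sits exactly where you placed your trust in~\cite{zb}. Your reduction merges the two phases into a single signed field $U=V_A-V_B$ that is nonzero on \emph{both} sides of the interface, with a diffusivity that jumps ($\alpha_A\neq\alpha_B$) across it and a flux-jump Stefan condition. But the theory of~\cite{zb}, as described and as invoked in the paper, treats the \emph{one-phase} problem: the solution is identically zero on one side of the moving boundary and solves a single SPDE on the other. A two-phase problem with discontinuous coefficients across the free boundary is not in its scope; the estimates needed to push the iteration/regularity arguments across that discontinuity are precisely the missing content, and your proposal explicitly defers them back to~\cite{zb} (``exactly where the regularity results of~\cite{zb} are indispensable''), which is circular. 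The paper avoids this entirely by never merging the phases: it maps both fields onto the same half-line $[0,\infty)$ via the moving-frame (and, for the bid side, reflected) substitution $\tilde V_A(t,S)=V_A(t,S^*(t)+S)$, $\tilde V_B(t,S)=V_B(t,S^*(t)-S)$, so each field is a \emph{one-phase} object to which Theorem 3.2 of~\cite{zb} applies, and couples them only through the scalar boundary: the fixed-point iteration is run jointly on two boundary velocities, $\rho\dot\beta_A^n$ and $\rho\dot\beta_B^n$, each given by a half-line heat-kernel representation, with $\beta^n=\beta_A^n-\beta_B^n$ feeding back into the shifted noise; then $S^*:=\lim_n\beta^n$, Kolmogorov continuity gives $\rho\dot\beta_A=\partial_S V_A(t,S^*(t)+)$ and $\rho\dot\beta_B=-\partial_S V_B(t,S^*(t)-)$, so the Stefan condition~\eqref{Eq:mloBC} holds, and $\tau=\lim_{L\to\infty}\tau^L$ with $\tau^L$ the first time either $|\dot\beta_A|$ or $|\dot\beta_B|$ reaches $L$. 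Note also that this gives a concrete stopping time, whereas your $\tau$ (``first exit from the regularity class'') is left as a menu of possibilities.

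A secondary but real flaw: your reconstruction $V_A=U^+$, $V_B=U^-$, and your reading of $S^*(t)$ as the sign-change set of $U$, presuppose $V_A\ge 0$ and $V_B\ge 0$. The noise in~\eqref{Eq:mlo} is additive (it is not multiplied by the solution), so positivity of $V_A,V_B$ is not guaranteed by the model; the zero set of $U$ need not be a single interface curve, and the claimed ``faithfulness'' of the reformulation fails. One can instead recover $V_A=U\,\mathbf{1}_{\{S>S^*\}}$ and $V_B=-U\,\mathbf{1}_{\{S<S^*\}}$, but then $S^*$ must be produced by the free-boundary problem itself (as the paper does via $\beta_A,\beta_B$), not read off from the sign structure of $U$ --- which returns you to the two-phase difficulty above.
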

\begin{proof} We roughly follow the same procedure to show the existence and uniqueness as in~\cite{zb}. First we make the following transformation: $\tilde V_A(t,S):=V_A(t,S^*(t)+S),\tilde u_{0A}(S):=u_{0A}(S^*(0)+S),\tilde V_B(t,S):=V_B(t,S^*(t)-S),\tilde u_{0B}(S):=u_{0B}(S^*(0)-S)$. Then the same argument as in Theorem 3.2,~\cite{zb} shows that there exists unique $\beta_A,\beta_B$ that is the limit of the iteration
\[\begin{split}
\rho\dot\beta_A^n(t)&=\int_0^\infty \frac{\partial \tilde p}{\partial x}(t,0,y)\tilde u_{0A}(y)dy+\int_0^t\int_0^\infty \frac{\partial \tilde p}{\partial x}(t-s,0,y)\sigma_A(y)W_{\beta^n}(dyds),\\
\rho\dot\beta_B^n(t)&=\int_0^\infty \frac{\partial \tilde p}{\partial x}(t,0,y)\tilde u_{0B}(y)dy+\int_0^t\int_0^\infty \frac{\partial \tilde p}{\partial x}(t-s,0,y)\sigma_B(y)W_{\beta^n}(dyds),\\
\beta^n(t)&=\beta_A^n(t)-\beta_B^n(t).
\end{split}\]
Define $S^*(t):=\lim_n\beta^n(t)$ which is also unique. Then using the same argument as in Theorem 3.2,~\cite{zb}, we have that $\tilde V_A$ and $\tilde V_B$ exist and are unique. Also, using Kolmogorov's Continuity Theorem as in Lemma 3.1.1,~\cite{zb}, we have that almost surely,
\[\begin{split}
\rho\dot\beta_A(t)&=\frac{\partial V_A}{\partial S}(t,S^*(t)+),\\
\rho\dot\beta_B(t)&=-\frac{\partial V_B}{\partial S}(t,S^*(t)-),
\end{split}\]
where $0\le t<\tau:=\lim_{L\to\infty} \tau^L$, and $\tau^L:=\inf\{t\ge 0:|\dot\beta_A(t)|\ge L\textrm{ or }|\dot\beta_B(t)|\ge L\}$.
 Therefore the Stefan boundary condition ~\eqref{Eq:mloBC} holds. $\Box$

\end{proof}

\section{Parameter Estimation}\label{S:MLO:ParamEst}

In this section a statistical method based on AIC is developed to estimate the parameters of model~\eqref{Eq:mlo} given a real dataset of the limit order book and the mid-price of a particular asset within a certain period of time.

Suppose the dataset consists of 3 matrices, two $T\times N$ matrices $D_A$, $D_B$ for the volumes of the ask and bid orders, and a $T\times 1$ matrix $P$ for the mid-price. The sampling steps for time and price are $\Delta_T$ and $\Delta_N$, that is, the dataset is from time $0$ to $\Delta_T T$, and the $t$-th row of $D_A$ stores the volumes of the ask orders from limit price $P[t]$ to $P[t]+\Delta_N N$, while the $t$-th row of $D_B$ stores the volumes of the bid orders from limit price $P[t]-\Delta_N N$ to $P[t]$, where $[\cdot]$ denotes the vector subscription.

Our goal in this section is to develop an algorithm to numerically compute or even find an explicit expression of the statistics served as the appropriate estimators of the model parameters under maximum-likelihood, minimum mean-square errors, and AIC. The method is decomposed into two parts, first parameters $\alpha_A,\alpha_B,\sigma_A,\sigma_B$ are estimated using maximum-likelihood approach combined with AIC based on the limit order book data $D_A$ and $D_B$, and then $u_{0A},u_{0B},\rho$ are estimated using least mean-square-errors combined with AIC based on the whole dataset $D_A,D_B,P$.

\subsection{AIC/MLE Estimation of $\alpha_A,\alpha_B,\sigma_A,\sigma_B$}

To estimate $\sigma_A$ and $\sigma_B$, we further assume that
\[ \sigma_i(x):=\frac{x^{1.6}}{1+xp_i(x)},i=A,B\]
where for $i=A,B$, $p_i$ is a polynomial with $\deg p_i\ge1$, so that $\sigma_i$ is a regular scaling function, and $\lim_{x\to\infty} \sigma_i(x)=0$, because there tends to be very little randomness as the limit price goes far beyond the mid-price. Then the goal is to estimate $\alpha_A,\alpha_B$ and the degrees of $p_A,p_B$ and their coefficients. Since the number of parameters $4+\deg p_A+\deg p_B$ is also to be estimated, an AIC-based approach is used.

Suppose $i=A,B$, and $p_i(x)=\sum_{j=0}^{d_i} p_{ij}x^j$ with $d_i\ge 1$. To estimate $\alpha_i,p_{i0},\ldots,p_{id_i}$, we use a finite-difference Euler approximation scheme similar in Section 4,~\cite{zb}. For each integer $n$, define the difference operator $\nabla:\mathbb R^n\to\mathbb R^{n-1}$ by
\[ \nabla\left((a_1,\ldots,a_n)^T\right)=\left(a_2-a_1,\ldots,a_n-a_{n-1}\right)^T.\]
Denote $\left(\nabla a\right)[i]$ by $\nabla a_i$. For a matrix with row label $R$ and column label $C$ denote $\nabla_R$ as the difference operator on column vectors and $\nabla_C$ as that on row vectors. Then for each $i=A,B,t=1,\ldots,T-1, S=1,\ldots,N-2$,
\[ \frac{\nabla_t D_i[t,S]}{\Delta_T\Delta_N} =\alpha_i \frac{\nabla_S^2D_i[t,S]}{\Delta_N^3}+\sigma_i(S\Delta_N)\frac{\xi_{t,S}}{\Delta_T\Delta_N}\]
where
\[ \xi_{t,S}:=W\left(\left[\Delta_Tt,\Delta_T(t+1)\right]\times\left[S^*(\Delta_Tt)+\Delta_NS,S^*(\Delta_Tt)+\Delta_N(S+1)\right]\right)\] are independent identically distributed Gaussian random variables with mean $0$ and variance $\Delta_T\Delta_N$. Fix $d_i=\deg p_i$, then the log likelihood function $\ell_i(\alpha_i,p_{i0},\ldots,p_{id_i})$ satisfies
\begin{equation}\label{Eq:mloEst1}
-2\ell_i(\alpha_i,p_{i0},\ldots,p_{id_i})=\sum_{t=1}^{T-1}\sum_{S=1}^{N-2}\left\{\left[\nabla_t D_i[t,S]-\alpha_i \left(\frac{\Delta_T}{\Delta_N^2}\right)\nabla_S^2D_i[t,S]\right]\frac{1+\sum_{j=0}^{d_i} p_{ij}S^{j+1}\Delta_N^{j+1}}{S^{1.6}\Delta_N^{1.6}}\right\}^2.
\end{equation}
When $d_i$ is fixed, the first order condition is a cubic formula of the variables, which shall be solved using numerical methods. If the model can be degenerated so that $\alpha_i=\alpha_{0i}$ is a known constant, then we can indeed solve for optimal $\hat p_{i0},\ldots,\hat p_{id_i}$:
\[ \textbf{p}_i=-\textbf{A}_i^{-1}\textbf{b}_i,\]
where $\textbf{p}_i=(\hat p_{i0},\ldots,\hat p_{id_i})^T$, $\textbf{A}_i$ is a $(d_i+1)$-square matrix with its $(m,n)$ element being
\[ \sum_{t=1}^{T-1}\sum_{S=1}^{N-2}\left\{\left[\nabla_t D_i[t,S]-\alpha_{0i} \left(\frac{\Delta_T}{\Delta_N^2}\right)\nabla_S^2D_i[t,S]\right]S^{m+n-3.2}\Delta_N^{m+n}\right\}^2,
\]
and $\textbf{b}_i$ is a $(d_i+1)\times 1$ matrix with its $m$-th element being
\[
\sum_{t=1}^{T-1}\sum_{S=1}^{N-2}\left\{\left[\nabla_t D_i[t,S]-\alpha_{0i} \left(\frac{\Delta_T}{\Delta_N^2}\right)\nabla_S^2D_i[t,S]\right]S^{2m-3.2}\Delta_N^{2m}\right\}^2.
\]

Therefore, under AIC (see~\cite{aicm}), the set of optimal estimators $\hat \alpha_i,\hat d_i, \hat p_{i0},\ldots,\hat p_{id_i}$ minimizes
\begin{equation}\label{Eq:mloAIC1}
 \textrm{AIC}_i:=2d_i-2\ell_i(\alpha_i,p_{i0},\ldots,p_{id_i}).
 \end{equation}
The algorithm to minimize $\textrm{AIC}_i$:
\begin{enumerate}
\item[(1)] $m:=-\infty, \hat d_i:=0$;
\item[(2)] For $d_i\in\{1,\ldots,10\}$:
\begin{enumerate}
\item[(a)] Numerically solve the first order condition of~\eqref{Eq:mloEst1} for optimal $\hat \alpha_i, \hat p_{i0},\ldots,\hat p_{id_i}$;
\item[(b)] Calculate $a:=$AIC($d_i$) based on~\eqref{Eq:mloAIC1};
\item[(c)] If $m>a$ then $m:=a, \hat d_i:=d_i$;
\end{enumerate}
\item[(3)] The final $\hat d_i$ is optimal within $\{1,\ldots,10\}$ with AIC $m$.
\end{enumerate}

\subsection{AIC/Least-MSE Estimation of $u_{0A},u_{0B},\rho$}

We continue to assume $i=A,B$. Similarly we need further assumptions about the initial condition $u_{0i}$. Assume
\[ u_{0i}(x)=xq_i(x)\exp(-\gamma_i x)\]
where $q_i$ is a polynomial and $\gamma_i>0$ is a parameter, so that $u_{0i}$ has exponential decay at infinity. Then $c_i:=\deg q_i$ and its coefficients $q_{i0},\ldots,q_{ic_i}$ are also a model parameter that needs to be determined under AIC. Now, if we take the expectation conditional on $S^*$ on both sides of the evolution equation of $V_i$, we get
\[ \mathbb E[V_i(t,S)|S^*]=\int_0^\infty p_-(t,S,y)u_{0i}(y)dy+\int_0^t S^*(s)\int_0^\infty q(t-s,S,y)\mathbb E[V_i(s,y)|S^*]dyds.\]
This implies that $\mathbb E[V_i|S^*]$ satisfies a deterministic Stefan-type PDE, which means that although we are generally unable to find an explicit expression of the solution, we can find a sufficiently accurate numerical solution. For a given dataset $D_A, D_B, P$, the goal in this part is thus to develop an algorithm to minimize the mean-square errors against $\mathbb E[V_i|S^*]$ and the mid-price function (moving boundary) determined by it under AIC.

Given a set of parameters $c_A, c_B, q_{A0},\ldots,q_{Ac_A},q_{B0},\ldots,q_{Bc_B},\gamma_A,\gamma_B,\rho$, denote the solution to the deterministic Stefan-type PDE (i.e., when $W\equiv 0$) by $\bar V_A$ and $\bar V_B$, which can be obtained numerically. Then the variances (errors) come from two parts: those from the evolution of the limit order book, and those from the evolution of the mid-price. Let $\theta_0$ be a predefined constant which is the weight of importance of how well mid-price fits against how well the limit order book fits. Namely, we shall minimize the weighted MSE
\[ \textrm{MSE}:=\textrm{MSE}_1+\theta_0\textrm{MSE}_2\]
where
\[ \textrm{MSE}_1:=\frac{1}{2TN}\sum_{i\in\{A,B\}}\sum_{t=1}^T \sum_{S=1}^N \left[D_i(t,S)-\bar V_i(\Delta_T t,\Delta_N S)\right]^2,\]
and
\[ \textrm{MSE}_2:=\frac{1}{T}\sum_{t=1}^T \left[\rho P(t)-\frac{\bar V_A(\Delta_T t,S)+\bar V_B(\Delta_T t,S)}{\Delta_T}\right]^2.\]
Finally, similar to the AIC approach in the previous part, the goal is to minimize
\[ c_A+c_B+\textrm{MSE},\]
which can be done by traversing all possible values of the model parameters, numerically finding $\bar V_A$ and $\bar V_B$, computing the corresponding MSE's, and finally finding the recorded optimal parameters. Note that unlike in previous part, because of the nonlinearity of $\bar V_A$ and $\bar V_B$, it is difficult to find a direct way to (numerically) solve for optimal coefficients for $q_i$, as opposed to $p_i$.

\section{Investment Optimization}\label{S:MLO:Optim}

Using the model combined with a set of optimal parameters adjusted for a given dataset, an investor can optimize his or her allocation of the amount of investment in the asset against consumption within a given amount of wealth.  To analyze the investment behavior under our model, we assume the investor can only buy a single asset by making limit order transactions, that is, fulfilling ask orders placed by other sellers. This scenario is typical when an investor is avert to the volatility of price changes in market order transactions, or there lacks sufficient liquidity of the asset but the investor still has a strong motivation to make purchases.

Let $U:\mathbb R_+^3\to\mathbb R$ be the utility function of the investor, which is monotone increasing and concave down on the second and third parameters. Specifically, $U(t,L_t,C_t)$ denotes the amount of happiness the investor obtains at time $t$ if having the amount of asset $L_t$ and consumption $C_t$. Since the investor would choose to pay the lowest possible price to buy an amount of asset, the amount of asset $L_t$ and the total cost to buy such amount is completely determined by the highest limit price $S^*(t)+B$ the investor is willing to pay, which is illustrated in Figure~\ref{Fig:mlo2}.

\begin{figure}
\centering
\includegraphics[width=120mm]{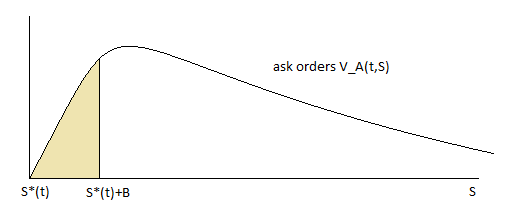}
\caption{Investment Optimization via Limit Orders}
\label{Fig:mlo2}
\end{figure}

Taking the model (and its parameters) as exogenous, and assuming the time $t$ wealth is $W_t$, we then have the following optimization problem at time $t$ (where the subscription $t$ in the expectation denotes it is conditional to all information up to time $t$):
\[ \max_{B\ge 0} \mathbb E_t\left[U\left(t,L_t(B),C_t\right)\right] \]
where
\[ L_t(B)=\int_{S^*(t)}^{S^*(t)+B} V_A(t,S)dS \]
subject to the budget constraint
\[ W_t = C_t + \int_{S^*(t)}^{S^*(t)+B}  S V_A(t,S) dS. \]

When $t$ is fixed, the optimization of $U$ with respect to the choice of $B$ is done through a static analysis; if we consider the full model, that is, both $t$ and $B$ vary, we come up with a dynamic analysis.

\subsection{Static Analysis}

Let time $t$ be fixed. Denote the partial derivatives of $U$ by $U_t, U_L,U_C$. Then we compute
\[\begin{split} \frac{\partial U}{\partial B}(t,L_t(B),C_t) &= U_L(t,L_t(B),C_t)V_A(t,S^*(t)+B)\\
&\quad\quad -U_C(t,L_t(B),C_t)(S^*(t)+B)V_A(t,S^*(t)+B).
\end{split}\]
Therefore the optimal $B=B^*$ satisfies the first order condition
\[ S^*(t)+B^*=\frac{U_L(t,L_t(B^*),C_t)}{U_C(t,L_t(B^*),C_t)}.\]
This means
\begin{theorem}\label{Thm:mlo1} In static optimization where the time $t$ is fixed, the optimal highest limit price to buy the asset is equal to the ratio of the marginal utility of amount of asset to that of consumption.
\end{theorem}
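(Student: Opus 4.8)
The plan is to translate the optimization statement into calculus and read off the economic interpretation from the first-order condition. First I would set up the objective $\mathbb{E}_t[U(t,L_t(B),C_t)]$ as a function of the single choice variable $B\ge 0$, treating $t$ as fixed throughout (this is what ``static'' means here). The key observation is that both $L_t(B)$ and $C_t$ depend on $B$ through the budget constraint, so I must carefully account for this dependence before differentiating. From the definition $L_t(B)=\int_{S^*(t)}^{S^*(t)+B} V_A(t,S)\,dS$, the Fundamental Theorem of Calculus gives $\tfrac{\partial L_t}{\partial B}=V_A(t,S^*(t)+B)$. Similarly, from the budget constraint $W_t=C_t+\int_{S^*(t)}^{S^*(t)+B} S\,V_A(t,S)\,dS$, differentiating with $W_t$ held fixed yields $\tfrac{\partial C_t}{\partial B}=-(S^*(t)+B)\,V_A(t,S^*(t)+B)$.

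Next I would apply the chain rule to $\tfrac{\partial U}{\partial B}$, using the partial derivatives $U_L$ and $U_C$ together with the two derivatives just computed. This produces exactly the expression displayed in the excerpt, namely
\[
\frac{\partial U}{\partial B}=U_L(t,L_t(B),C_t)\,V_A(t,S^*(t)+B)-U_C(t,L_t(B),C_t)\,(S^*(t)+B)\,V_A(t,S^*(t)+B).
\]
Setting this equal to zero and factoring out the common nonzero term $V_A(t,S^*(t)+B)$ (valid at any price where there is positive ask volume) gives the first-order condition $U_L=(S^*(t)+B)\,U_C$, which rearranges to $S^*(t)+B^*=U_L/U_C$ evaluated at the optimum. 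The statement of Theorem~\ref{Thm:mlo1} is then just the verbal reading of this equation: the optimal highest limit price equals the ratio of marginal utility of asset holdings to marginal utility of consumption.

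I would then note that because $U$ is concave in its second and third arguments, this first-order condition identifies a genuine maximum rather than a minimum or saddle, so the critical point $B^*$ is indeed optimal; one should also check the boundary case $B=0$ and verify the interior solution dominates. The main obstacle is not any single computation but rather the bookkeeping of the implicit dependence of $C_t$ on $B$ through the constraint --- one must resist the temptation to treat $C_t$ as a free variable and instead substitute the constraint before differentiating. Once that dependence is handled correctly, the rest is a direct application of the chain rule and the Fundamental Theorem of Calculus, and the economic interpretation follows immediately from the algebraic form of the stationarity condition.
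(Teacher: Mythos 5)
Your proposal is correct and follows essentially the same route as the paper: differentiate $L_t(B)$ by the Fundamental Theorem of Calculus, differentiate $C_t$ implicitly through the budget constraint, apply the chain rule to obtain the displayed expression for $\partial U/\partial B$, and read the first-order condition $S^*(t)+B^*=U_L/U_C$ after factoring out $V_A(t,S^*(t)+B)$. Your added remarks on concavity guaranteeing a maximum and on checking the boundary $B=0$ are sound refinements the paper leaves implicit, but they do not change the argument.
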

\begin{proof} As above. $\Box$
\end{proof}
Note that $S^*(t)+B^*$ can be seen as the highest amount of money the investor is willing to pay to substitute $1$ unit of asset with the same amount of consumption.

\subsection{Dynamic Analysis}

Now consider that $t$ also varies, and we are interested in $\mathbb E_t[dU]$ from $t$ to $t+dt$. Indeed, if we have $\mathbb E_t[dU(t,L_t(B^*),C_t)]>0$, then even if $B=B^*$ reaches its static optimality, the investor would still wait for an amount of time to maximize the utility.

First we consider the evolution of $L_t(B)$. Fixing $B$ and plugging in $dV_A$, we have formally
\[\begin{split}
dL_t(B)&=\alpha_A\left[\frac{\partial V_A}{\partial S}(t,S^*(t)+B)-\frac{\partial V_A}{\partial S}(t,S^*(t))\right]dt\\
&\quad\quad+\int_{S^*(t)}^{S^*(t)+B} \sigma(S-S^*(t))W(dSdt)\\
&\quad\quad+\left[\int_0^t \int_{S^*(t)}^{S^*(t)+B} \sigma(S-S^*(t))W(dSdt)\right]\dot S^*(t)dt.
\end{split}\]
Similarly,
\[\begin{split}
dC_t&=\alpha_A\left[V_A(t,S^*(t)+B)-(S^*(t)+B)\frac{\partial V_A}{\partial S}(t,S^*(t)+B)+S^*(t)\frac{\partial V_A}{\partial S}(t,S^*(t))\right]dt\\
&\quad\quad+\int_{S^*(t)}^{S^*(t)+B} S\sigma(S-S^*(t))W(dSdt)\\
&\quad\quad+\left[\int_0^t \int_{S^*(t)}^{S^*(t)+B} S\sigma(S-S^*(t))W(dSdt)\right]\dot S^*(t)dt.
\end{split}\]

Then by It\^{o}'s formula, and noting that $S^*(t)$ is part of the information at $t$, we get
\[\begin{split}
 \mathbb E_t[dU]&=U_t dt+U_L\alpha_A\left[\frac{\partial V_A}{\partial S}(t,S^*(t)+B)-\frac{\partial V_A}{\partial S}(t,S^*(t))\right]dt+\frac12 U_{LL}\int_0^B\sigma^2(S)dSdt\\
 &\quad\quad+U_C\alpha_A\left[V_A(t,S^*(t)+B)-(S^*(t)+B)\frac{\partial V_A}{\partial S}(t,S^*(t)+B)+S^*(t)\frac{\partial V_A}{\partial S}(t,S^*(t))\right]dt\\
 &\quad\quad+\frac12U_{CC}\int_0^B (S^*(t)+S)^2\sigma^2(S)dSdt.
 \end{split}\]
Since for a fixed $t$ we are only interested at $B^*$ in Theorem~\ref{Thm:mlo1}, and $S^*+B^*=U_L/U_C$, we have
\begin{equation}\label{Eq:mloDyn}
\begin{split}
 \mathbb E_t\left[\frac{dU}{dt}(t,L_t(B^*),C_t)\right]&=U_t+U_C\alpha_A\left[V_A(t,S^*(t)+B^*)-B^*\frac{\partial V_A}{\partial S}(t,S^*(t))\right]\\
 &\quad\quad+\frac12 U_{LL}\int_0^{B^*}\sigma^2(S)dS+\frac12U_{CC}\int_0^{B^*}(S^*(t)+S)^2\sigma^2(S)dS.
 \end{split}\end{equation}

If we denote the absolute risk aversions of the investor with respect to the amount of asset and the consumption by
\[\begin{split}
 r_L&:=-\frac{U_{LL}}{U_L},\\
 r_C&:=-\frac{U_{CC}}{U_C},
\end{split}\]
then~\eqref{Eq:mloDyn} is equivalent to
\begin{equation}\label{Eq:mloDynR}
\begin{split}
 \mathbb E_t\left[\frac{dU}{dt}(t,L_t(B^*),C_t)\right]&=U_t+U_C\left\{\alpha_A\left[V_A(t,S^*(t)+B^*)-B^*\frac{\partial V_A}{\partial S}(t,S^*(t))\right]\right.\\
 &\quad\quad\left.-\frac12 r_L(S^*+B^*)\int_0^{B^*}\sigma^2(S)dS-\frac12r_C\int_0^{B^*}(S^*+S)^2\sigma^2(S)dS\right\}.
 \end{split}\end{equation}

The investor would then use the dataset $D_A$ to compute $\mathbb E_t\left[\frac{dU}{dt}\right]$ and see the expected change of $U$ from $t$ to $t+dt$ at $B^*$. Specifically, we have the following observations from Equation~\eqref{Eq:mloDyn} or~\eqref{Eq:mloDynR}:
\begin{enumerate}
\item[(1)] the larger the quantity
\[ V_A(t,S^*(t)+B^*)-B^*\frac{\partial V_A}{\partial S}(t,S^*(t))\]
is, the more possible it is for $U$ to increase;
\item[(2)] the greater absolute risk aversions $r_L$ and $r_C$ the investor has, the more possible it is for $U$ to decrease;
\item[(3)] the larger $B^*$ is, the more risk $U$ is exposed to for a decrease, since both $\int_0^{B^*}\sigma^2(S)dS$ and $\int_0^{B^*} (S^*+S)^2\sigma^2(S)dS$ are increasing functions of $B^*$.
\end{enumerate}

Also, if we make two natural assumptions about $U$,
\begin{enumerate}
\item[(a)] the utility function is discounted in time, that is, $U_t<0$,
\item[(b)] the investor is risk avert or risk neutral, that is, $U_{LL}\le 0,U_{CC}\le 0$,
\end{enumerate}
then we have
\begin{theorem}\label{Thm:mloDyn1} Suppose $B^*$ is the statically optimal choice at time $t$ as in Theorem~\ref{Thm:mlo1}. If
\begin{equation}\label{Eq:mloDyn1}
 V_A(t,S^*(t)+B^*)\le B^*\frac{\partial V_A}{\partial S}(t,S^*(t)),
 \end{equation}
then
\[ \mathbb E_t \left[\frac{dU}{dt}(t,L_t(B^*),C_t)\right]\le0.\]
\end{theorem}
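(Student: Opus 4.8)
The plan is to read off the conclusion directly from the closed-form identity \eqref{Eq:mloDyn} for $\mathbb{E}_t[dU/dt]$ evaluated at the statically optimal $B^*$, since all of the analytic work (the It\^{o} expansion of $dL_t$ and $dC_t$, together with the substitution $S^*+B^*=U_L/U_C$ from Theorem~\ref{Thm:mlo1}) has already been carried out in deriving that equation. The remaining task is a purely term-by-term sign check on the four summands appearing on the right-hand side of \eqref{Eq:mloDyn} (equivalently \eqref{Eq:mloDynR}).

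First I would record the signs supplied by the structural hypotheses on $U$ and by the model. Assumption (a) that the utility is discounted in time gives $U_t<0$; assumption (b) that the investor is risk averse or neutral gives $U_{LL}\le 0$ and $U_{CC}\le 0$; the model fixes $\alpha_A>0$ as a positive constant; and $U_C>0$ because $U$ is monotone increasing in its consumption argument. These are the only facts about $U$ and the parameters that the argument will invoke.

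Next I would verify that each of the four terms of \eqref{Eq:mloDyn} is nonpositive. The first term $U_t$ is negative by (a). For the second term $U_C\alpha_A\left[V_A(t,S^*(t)+B^*)-B^*\,\partial V_A/\partial S(t,S^*(t))\right]$, the prefactor $U_C\alpha_A$ is positive, while the bracket is precisely the quantity that hypothesis \eqref{Eq:mloDyn1} forces to be $\le 0$; hence the term is $\le 0$. For the third and fourth terms I would note that the integrands $\sigma^2(S)\ge 0$ and $(S^*(t)+S)^2\sigma^2(S)\ge 0$ are nonnegative on $[0,B^*]$ with $B^*\ge 0$, so both integrals are nonnegative, and multiplying by the nonpositive factors $U_{LL}/2$ and $U_{CC}/2$ from (b) keeps each term $\le 0$. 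Summing four nonpositive quantities yields $\mathbb{E}_t[dU/dt]\le 0$, which is the claim.

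There is no genuine analytic obstacle here: the statement is essentially a sign corollary of the already-established identity \eqref{Eq:mloDyn}, and the only place demanding care is the bookkeeping of signs. In particular I would be careful to confirm $U_C>0$ and $\alpha_A>0$, since these ensure that hypothesis \eqref{Eq:mloDyn1} drives the second term in the intended direction, and to confirm that the two variance integrals are integrals of manifestly nonnegative integrands over an interval of nonnegative length. Once these elementary facts are in place the inequality is immediate.
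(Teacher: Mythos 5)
Your proposal is correct and matches the paper's own argument exactly: the paper's proof (``As above'') is precisely the term-by-term sign check on the right-hand side of~\eqref{Eq:mloDyn}, using assumptions (a) and (b), the positivity of $U_C$ and $\alpha_A$, the nonnegativity of the two variance integrals, and hypothesis~\eqref{Eq:mloDyn1} to make the bracketed term nonpositive.
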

\begin{proof} As above. $\Box$
\end{proof}

\begin{figure}
\centering
\includegraphics[width=100mm]{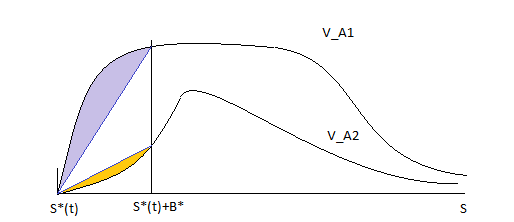}
\caption{An Illustration of Theorem~\ref{Thm:mloDyn1}}
\label{Fig:mlo3}
\end{figure}

Note that~\eqref{Eq:mloDyn1} in Theorem~\ref{Thm:mloDyn1} has an intuitive illustration in Figure~\ref{Fig:mlo3}. The quantity $V_A(t,S^*(t)+B^*)/B^*$ is the slope of the colored lines between the mid-price point $(S^*(t),0)$ and the optimal point $(S^*+B^*,V_A(t,S^*+B^*))$.
 \begin{enumerate}
 \item[*] If it is less than the boundary derivative $\partial V_A(t,S^*(t))/\partial S$ as shown in curve $V_{A1}$, then the utility at next instant $t+dt$ is expected to drop, as the volume of ask orders at lower limit prices between $S^*$ and $S^*+B^*$ tends to fall, so it is best to make the purchase at $t$ rather than wait.
\item[*] On the other hand, as shown in curve $V_{A2}$, the volume at lower prices tends to rise, and it might be wise to wait for a future time to make the purchase. In this case the investor needs to evaluate other factors such as time discount $U_t$ and risk premium terms as well.
\end{enumerate}

\section{Summary}

In this chapter we summarize the main results obtained in the previous sections. Throughout this chapter we fix a probability space $(\Omega,\mathcal F,\mathbb P)$, and suppose $W:\Omega\times\mathbb R\times\mathbb R\to\mathbb R$ is a standard 2-dimensional Brownian sheet.

\subsection{Modeling}

For a set of given model parameters $u_{0A},u_{0B},\alpha_A,\alpha_B,\sigma_A,\sigma_B,\rho$, the model
\begin{equation}
\begin{split}
 \frac{\partial V_A}{ \partial t} &= \alpha_A\frac{\partial^2 V_A}{\partial S^2} +\sigma_A(|S-S^*(t)|)\frac{\partial^2 W}{\partial t\partial S},\forall S>S^*(t),\\
 V_A(t,S)&=0,\forall S\le S^*(t),\\
 V_A(0,S)&=u_{0A}(S),\\
 \frac{\partial V_B}{ \partial t} &= \alpha_B\frac{\partial^2 V_B}{\partial S^2} +\sigma_B(|S-S^*(t)|)\frac{\partial^2 W}{\partial t\partial S},\forall S<S^*(t),\\
 V_B(t,S)&=0,\forall S\ge S^*(t),\\
 V_B(0,S)&=u_{0B}(S),\\
 \rho\frac{dS^*}{dt}(t)&=\left[\frac{\partial V_A}{\partial S}(t,S^*(t)+)+\frac{\partial V_B}{\partial S}(t,S^*(t)-)\right]
\end{split}\end{equation}
exists and is unique.

\subsection{Parameter Estimation}

Suppose $P,D_A,D_B$ is a given limit order dataset. For $i=A,B$, suppose
\[ \sigma_i(x):=\frac{x^{1.6}}{1+xp_i(x)},i=A,B\]
where $p_i=\sum_{j=0}^{d_i} p_{ij}x^j$ is a polynomial with $\deg p_i\ge1$. Then the first step is to minimize for $i=A,B$,
\begin{equation}\begin{split}
\textrm{AIC}_i&:=2d_i-2\ell_i(\alpha_i,p_{i0},\ldots,p_{id_i})\\
&=\sum_{t=1}^{T-1}\sum_{S=1}^{N-2}\left\{\left[\nabla_t D_i[t,S]-\alpha_i \left(\frac{\Delta_T}{\Delta_N^2}\right)\nabla_S^2D_i[t,S]\right]\frac{1+\sum_{j=0}^{d_i} p_{ij}S^{j+1}\Delta_N^{j+1}}{S^{1.6}\Delta_N^{1.6}}\right\}^2.
\end{split}\end{equation}
Suppose also
\[ u_{0i}(x)=xq_i(x)\exp(-\gamma_i x)\]
where $q_i=\sum_{j=0}^{c_i} q_{ij}x^j$ is a polynomial and $\gamma_i>0$ is a parameter. Then the second step is for a predefined weight $\theta_0$ to minimize
\[ c_A+c_B+\textrm{MSE}_1+\theta_0\textrm{MSE}_2\]
where
\[\begin{split}
\textrm{MSE}_1&:=\frac{1}{2TN}\sum_{i\in\{A,B\}}\sum_{t=1}^T \sum_{S=1}^N \left[D_i(t,S)-\bar V_i(\Delta_T t,\Delta_N S)\right]^2,\\
\textrm{MSE}_2&:=\frac{1}{T}\sum_{t=1}^T \left[\rho P(t)-\frac{\bar V_A(\Delta_T t,S)+\bar V_B(\Delta_T t,S)}{\Delta_T}\right]^2,
\end{split}\]
and $\bar{V}_i(t,x)$ is the numerical solution by setting $W\equiv0$.

\subsection{Optimization}

Suppose $B^*(t)$ is the statically optimal price at time $t$ with respect to the utility function $U$ and wealth $W_t$. Then
\[ S^*(t)+B^*(t)=\frac{U_L(t,L_t(B^*(t)),C_t)}{U_C(t,L_t(B^*(t)),C_t)}.\]
In other words, the optimal highest limit price to buy the asset is equal to the ratio of the marginal utility of amount of asset to that of consumption.

Also, let $r_L,r_C$ be the absolute risk aversions with respect to $L,C$, then we have
\begin{equation}
\begin{split}
 \mathbb E_t\left[\frac{dU}{dt}(t,L_t(B^*),C_t)\right]&=U_t+U_C\left\{\alpha_A\left[V_A(t,S^*(t)+B^*)-B^*\frac{\partial V_A}{\partial S}(t,S^*(t))\right]\right.\\
 &\quad\quad\left.-\frac12 r_L(S^*+B^*)\int_0^{B^*}\sigma^2(S)dS-\frac12r_C\int_0^{B^*}(S^*+S)^2\sigma^2(S)dS\right\}.
 \end{split}\end{equation}
In particular, if
\begin{equation}
 \frac{V_A(t,S^*(t)+B^*)}{B^*}\le \frac{\partial V_A}{\partial S}(t,S^*(t)),
 \end{equation}
then
\[ \mathbb E_t \left[\frac{dU}{dt}(t,L_t(B^*),C_t)\right]\le0.\]

\bibliographystyle{plain}
\bibliography{thesisbib}

\end{document}